\documentclass[preprint,12pt]{elsarticle}

\usepackage{amsmath,amssymb,amsthm}
\usepackage{mathtools}
\usepackage{booktabs}
\usepackage{xcolor}
\usepackage{algorithm}
\usepackage{algpseudocode}
\usepackage[colorlinks=true,linkcolor=blue!70!black,
            citecolor=blue!70!black,urlcolor=blue!70!black]{hyperref}
\usepackage{cleveref}

\newtheorem{theorem}{Theorem}
\newtheorem{lemma}[theorem]{Lemma}
\newtheorem{proposition}[theorem]{Proposition}
\newtheorem{corollary}[theorem]{Corollary}
\newtheorem{definition}[theorem]{Definition}
\newtheorem{remark}[theorem]{Remark}

\newcommand{\T}{\mathcal{T}}
\newcommand{\R}{\mathcal{R}}

\newcommand{\Rr}{\mathbb{R}}
\newcommand{\stack}{\Pi}
\newcommand{\GL}{\mathcal{G}_L}
\newcommand{\relay}[2]{\hat{\gamma}_{#1,#2}}

\journal{Information Processing Letters}

\begin{document}

\begin{frontmatter}

\title{The Extremum Stack is a Minimal Sufficient Statistic
  for Rate-Independent Functionals:\\
  A Kolmogorov Complexity Characterisation}

\author[pwut]{Piotr Frydrych}
\ead{piotr.frydrych@pw.edu.pl}

\affiliation[pwut]{organization={The Metrology and Biomedical
  Engineering Institute, Faculty of Mechatronics,
  Warsaw University of Technology},
  city={Warsaw}, country={Poland}}

\begin{abstract}
We prove that the extremum stack $\stack_n$ of a discrete sequence
$u_{0:n} \in \mathcal{G}_L^{n+1}$ is a \emph{minimal sufficient
statistic} for the class $\R$ of all computable,
causal, rate-independent functionals, in the sense of
Kolmogorov complexity.
Specifically, we establish:
\[
  K(\stack_n) - O(1)
  \;\leq\; K_{\R}(u_{0:n})
  \;\leq\; K(\stack_n) + O(1),
\]
where $K_{\R}(u_{0:n})$ is the length of the shortest program
answering every query in $\R$, and the $O(1)$
overhead is independent of both the sequence length $n$
and the stack depth $k$.
Sufficiency follows from the classical wiping property of the
Preisach hysteresis operator.
Minimality is established via a finite indicator family whose
rate-independence is verified explicitly.
Any compression of a hysteresis-driven stream that preserves
the full class $\R$ must therefore retain at least
$K(\stack_n) - O(1)$ bits; the stack-based representation is
asymptotically optimal.
\end{abstract}

\begin{keyword}
Kolmogorov complexity \sep
sufficient statistic \sep
rate-independence \sep
Preisach operator \sep
extremum stack \sep
hysteresis \sep
data compression
\end{keyword}

\end{frontmatter}

\section{Introduction}
\label{sec:intro}

A functional $F$ on sequences is \emph{rate-independent} if it
responds only to the sequence of local extrema, not to the
temporal spacing or absolute position of values
\citep{Brokate1996,Mayergoyz1991}.
Rate-independence appears in ferromagnetic hysteresis, elastoplastic
systems, financial threshold models \citep{Frydrych2014}, and —
more recently — in sequence-modelling architectures based on the
Preisach operator \citep{Frydrych2025PAL}.

A central data structure for computing rate-independent functionals
is the \emph{extremum stack} $\stack_n$: the alternating sequence
of local maxima and minima of $u_{0:n}$, maintained in decreasing
order.
It is known that $\stack_n$ is a \emph{sufficient} statistic for
all Preisach functionals \citep{Mayergoyz1991} — that is, the
Preisach output at time $n$ depends only on $\stack_n$, not on
the full history $u_{0:n}$.

\paragraph{Contribution.}
We prove that $\stack_n$ is also \emph{minimal}: no strictly shorter
representation suffices for the full class $\R$ of computable
rate-independent functionals.
The proof uses Kolmogorov complexity theory
\citep{LiVitanyi2008} and an explicitly constructed finite
indicator family.
We further show that the overhead incurred by the stack
representation is $O(1)$ — a constant independent of both the
sequence length $n$ and the stack depth $k$.
This tightens an earlier informal claim of $O(\log n)$ overhead
and also an intermediate $O(\log K(\stack_n))$ bound: the correct
bound follows from the prefix-free invariance theorem
$K(f(x)) \leq K(x) + O(1)$ for computable $f$, giving $O(1)$.

\paragraph{Implications.}
The minimality result has two direct consequences:
(i) any lossless compression scheme for hysteresis-driven streams
that preserves $\R$-query answering must retain at least
$K(\stack_n) - O(1)$ bits, making the stack asymptotically optimal;
(ii) the stack provides a principled basis for an $O(n)$-time,
$O(k)$-space online compression algorithm, where $k \leq n$ is the
current stack depth.

\section{Preliminaries}
\label{sec:prelim}

\subsection{Discrete sequences and the grid}

Throughout, we fix a resolution $L \geq 2$ and work with sequences
over the discrete grid
$\GL = \{0, \Delta, 2\Delta, \ldots, 1\}$, $\Delta = 1/L$.
Restricting to $\GL$ is necessary: over $\Rr$, the indicator
family constructed in \cref{sec:minimality} is uncountable and
the Kolmogorov framework does not apply directly.

\subsection{Rate-independent functionals}

\begin{definition}[Rate-independence]
\label{def:ri}
A functional $F: \GL^{*} \to \Rr$ is \emph{rate-independent} if
\[
  \stack_n(u) = \stack_m(v)
  \;\Longrightarrow\;
  F[u](n) = F[v](m)
\]
for all $u \in \GL^{n+1}$, $v \in \GL^{m+1}$.
That is, $F$ depends on the input history only through its extremum
stack, not through absolute timing, speed of change, or values at
non-extremal positions.
Let $\R$ denote the class of all computable rate-independent
functionals on $\GL^*$.
\end{definition}

\noindent
A functional is \emph{causal} if $F[u](n)$ depends only on the
prefix $u_0,\ldots,u_n$; since $F$ takes a prefix as input
throughout this paper, causality is implicit.

\begin{remark}[Relation to time-reparameterisation]
\label{rem:reparam}
In continuous time, rate-independence is classically stated via
non-decreasing surjections $\varphi: [0,T] \to [0,T]$
\citep{Brokate1996}.
In discrete time, a bijection $\varphi:\{0,\ldots,m\}\to
\{0,\ldots,n\}$ with $m=n$ must be the identity, making
the condition vacuous.
The stack-based definition above is the correct discrete-time
analogue: two sequences are equivalent if and only if they produce
the same extremum stack, which captures the spirit of
``same order of extrema, regardless of timing.''
\end{remark}

\subsection{The extremum stack}

\begin{definition}[Extremum stack]
\label{def:stack}
The \emph{extremum stack} of $u_{0:n} \in \GL^{n+1}$ is the
sequence
\[
  \stack_n = [(M_1, m_1), (M_2, m_2), \ldots, (M_k, m_k)],
\]
where pairs are ordered from \emph{oldest} (bottom, index~1) to
\emph{newest} (top, index~$k$).
Each pair $(M_i, m_i)$ records a dominant local maximum $M_i$
and the most recent local minimum $m_i$ that followed it and
survived wiping.
The maxima are strictly decreasing from bottom to top:
$M_1 > M_2 > \cdots > M_k$.
The minima are strictly \emph{increasing} from bottom to top:
$m_1 < m_2 < \cdots < m_k$ (each newer minimum is less extreme
than its predecessor).
Each pair satisfies $M_i > m_i$.
The depth $k = |\stack_n|$ satisfies $k \leq \lfloor n/2 \rfloor + 1$.
\end{definition}

\begin{remark}[Ordering convention]
The decreasing-maxima, increasing-minima ordering follows the
classical Preisach staircase geometry \citep{Mayergoyz1991}:
the outermost loop has the largest maximum and smallest minimum;
inner loops are strictly contained.
In our notation, $(M_1,m_1)$ is the outermost (oldest) pair and
$(M_k,m_k)$ is the innermost (newest).
\end{remark}

The stack is updated in amortised $O(1)$ time per step
(\cref{alg:stack}): each element is pushed and popped at most once.

\begin{algorithm}[t]
\caption{Extremum Stack Update (wiping-out rule)}
\label{alg:stack}
\small
\begin{algorithmic}[1]
\Require Stack $\stack$, previous extremum $e_{\mathrm{prev}}$,
         current direction $d \in \{+1,-1,0\}$,
         new observation $u$
\Ensure Updated $(\stack, e_{\mathrm{prev}}, d)$
\State $d_{\mathrm{new}} \leftarrow \mathrm{sign}(u - e_{\mathrm{prev}})$
\If{$d_{\mathrm{new}} = 0$}
  \State \Return \Comment{No change; flat segment}
\EndIf
\If{$d \neq 0$ \textbf{and} $d_{\mathrm{new}} \neq d$}
  \Comment{Direction reversal: $e_{\mathrm{prev}}$ is a confirmed extremum}
  \If{$d = +1$} \Comment{Previous was a local maximum}
    \State $m_{\mathrm{last}} \leftarrow
      \begin{cases} m_k & \text{if } \stack \neq \emptyset \\ -\infty \end{cases}$
    \While{$\stack \neq \emptyset$ \textbf{and} $M_k < e_{\mathrm{prev}}$}
      \State $m_{\mathrm{last}} \leftarrow m_k$;\quad $\mathrm{pop}(\stack)$
    \EndWhile
    \State $\mathrm{push}(\stack,\,(e_{\mathrm{prev}},\,m_{\mathrm{last}}))$
  \Else \Comment{Previous was a local minimum}
    \State $M_{\mathrm{last}} \leftarrow
      \begin{cases} M_k & \text{if } \stack \neq \emptyset \\ +\infty \end{cases}$
    \While{$\stack \neq \emptyset$ \textbf{and} $m_k > e_{\mathrm{prev}}$}
      \State $M_{\mathrm{last}} \leftarrow M_k$;\quad $\mathrm{pop}(\stack)$
    \EndWhile
    \State $\mathrm{push}(\stack,\,(M_{\mathrm{last}},\,e_{\mathrm{prev}}))$
  \EndIf
\EndIf
\State $e_{\mathrm{prev}} \leftarrow u$;\quad $d \leftarrow d_{\mathrm{new}}$
\State \Return $(\stack, e_{\mathrm{prev}}, d)$
\end{algorithmic}
\end{algorithm}

\begin{remark}[Correctness of Algorithm~\ref{alg:stack}]
A push occurs only when a direction reversal is detected,
i.e.\ when $e_{\mathrm{prev}}$ is a \emph{confirmed} local extremum.
During a monotone run ($d_{\mathrm{new}} = d$), $e_{\mathrm{prev}}$
is updated but nothing is pushed, preserving the invariants
of Definition~\ref{def:stack}.
The depth satisfies $k \leq \lfloor n/2 \rfloor + 1$
since each push requires a direction reversal.
The algorithm is initialised with $\stack = \emptyset$,
$e_{\mathrm{prev}} = u_0$, $d = 0$.
\end{remark}

\subsection{Kolmogorov complexity}

We use standard Kolmogorov complexity \citep{LiVitanyi2008}.
$K(x)$ denotes the length of the shortest self-delimiting program
that outputs $x$ on a fixed universal Turing machine $\mathcal{U}$.
For a class of functionals $\R$, we define:

\begin{definition}[$\R$-query complexity]
\label{def:kR}
\[
  K_{\R}(u_{0:n})
  = \min\bigl\{|p| : \text{for all } F \in \R,\;
    \mathcal{U}(p, F) = F[u](n)\bigr\},
\]
the length of the shortest program that, given any $F \in \R$
as an additional input, computes the answer $F[u](n)$.
\end{definition}

\section{Sufficiency of the Extremum Stack}
\label{sec:sufficiency}

\begin{proposition}[Stack sufficiency]
\label{prop:suff}
Every $F \in \R$ satisfies $F[u](n) = f(\stack_n)$
for some computable function $f$.
\end{proposition}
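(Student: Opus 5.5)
The plan is to build $f$ explicitly from $F$ and a canonical "realiser" of each stack, and then to use Definition~\ref{def:ri} to show that $f(\stack_n)=F[u](n)$. The argument has three steps.

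\emph{Step 1: the range of the stack map is decidable.} Let $\mathcal{S}$ be the set of all stacks $\stack_m(v)$ arising from some $v\in\GL^{m+1}$. Given a candidate finite list of pairs $S$ over $\GL$, we can decide whether $S\in\mathcal{S}$. The key observation is that a realiser, if one exists, can be chosen of bounded length: the alternating sequence that lists the stored extrema of $S$ in bottom-to-top order, with at most a constant number of extra grid points, has length $O(|S|)$. I would therefore simply enumerate all $v\in\GL^{\le c|S|+c}$, run Algorithm~\ref{alg:stack} on each, and check whether any of them produces $S$. Because the search space is finite, this is decidable.

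\emph{Step 2: the canonical realiser $\rho$ is computable.} For $S\in\mathcal{S}$, define $\rho(S)$ to be the first sequence $v$, in a fixed length-lexicographic enumeration of $\GL^*$, such that running Algorithm~\ref{alg:stack} on $v$ yields the stack $S$. This search terminates because $S\in\mathcal{S}$. Algorithm~\ref{alg:stack} is computable, so $\rho$ is a computable partial function whose domain is exactly $\mathcal{S}$. Once we have $\rho$, Step 1's length bound is no longer needed for correctness. It is used only if one insists that $f$ be total with a decidable domain; for inputs outside $\mathcal{S}$, $f$ can output $0$.

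\emph{Step 3: define $f$ and verify it.} Set $f(S)=F[\rho(S)](|\rho(S)|-1)$. Since $F\in\R$ is computable, $f$ is computable. Now fix $u\in\GL^{n+1}$ and let $S=\stack_n(u)$. Then $v=\rho(S)$ satisfies $\stack_{m}(v)=\stack_n(u)$, where $m=|v|-1$. By Definition~\ref{def:ri}, $F[v](m)=F[u](n)$, and this is exactly $f(\stack_n)=F[u](n)$.

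The main subtlety I expect is that the statement is almost tautological. Definition~\ref{def:ri} already asserts that $F$ factors through $\stack_n$, so the only real content is the \emph{computability} of the factor $f$. That computability hinges entirely on Step 2, which needs two things. First, the stack map must be computable, which Algorithm~\ref{alg:stack} provides. Second, we need an effective way to invert it on its range, which unbounded search supplies. The paper's remark that sufficiency "follows from the wiping property" is what makes the definition natural, because Preisach outputs depend only on surviving extrema. The proof itself, however, only needs Definition~\ref{def:ri}.

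One point to state carefully is what "the stack" contains. If the state of Algorithm~\ref{alg:stack} also includes $e_{\mathrm{prev}}$ and $d$, I take $\stack_n$ to mean whatever object Definition~\ref{def:ri} quantifies over, so the argument goes through unchanged.
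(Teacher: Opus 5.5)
Your argument is correct, and it is more complete than the paper's. The paper's proof is the one-line observation that, by Definition~\ref{def:ri}, $F[u](n)$ depends on $u_{0:n}$ only through $\stack_n(u)$, plus an appeal to the wiping property to connect with Preisach integrals. It never explains why the factor $f$ is \emph{computable}, which, as you rightly say, is the only non-tautological content of the statement. You take the same factorization step from the definition and add the missing piece: you invert the computable stack map on its range by a length-lexicographic search for a canonical realiser $\rho(S)$, and set $f(S)=F[\rho(S)](|\rho(S)|-1)$.

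Besides proving the computability claim, this gives uniformity. Because $\rho$ does not depend on $F$, the map $(F,S)\mapsto F[\rho(S)](|\rho(S)|-1)$ is a single fixed procedure. That is exactly what Corollary~\ref{cor:upper} needs when it asserts a post-processor ``independent of $n$, $k$, $F$'', and the paper's argument does not supply it. Your choice not to rely on the wiping property is also prudent. A Preisach output depends on the current input $u_n$, which Algorithm~\ref{alg:stack} keeps in $e_{\mathrm{prev}}$ rather than in $\stack$; this is the point behind your closing caveat.

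The one soft spot is Step~1. The $O(|S|)$ realiser bound is asserted rather than proved. Moreover, for Algorithm~\ref{alg:stack} taken literally, the ``list the stored extrema'' recipe does not directly apply: the algorithm pushes pairs with a $\pm\infty$ entry when the stack is empty, and its strict comparisons let it push repeated copies of a pair. Since the statement only needs $f$ on actual stacks, the partial computable $f$ of Steps~2--3 already suffices, so Step~1 can simply be dropped.
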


\begin{proof}
Immediate from \cref{def:ri}: rate-independence means $F[u](n)$
depends on $u_{0:n}$ only through $\stack_n(u)$.
For the connection to the Preisach integral,
$\mathcal{P}_\mu[u](n) = \iint_{\T}
\relay{\alpha}{\beta}[u](n)\,\mu(\alpha,\beta)\,d\alpha\,d\beta$
depends only on $\stack_n$ by the classical wiping property
\citep{Mayergoyz1991}.
\end{proof}

\begin{corollary}[Upper bound]
\label{cor:upper}
$K_{\R}(u_{0:n}) \leq K(\stack_n) + O(1)$.
\end{corollary}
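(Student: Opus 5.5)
The plan is to exhibit one explicit program $p$ that works uniformly for every $F \in \R$. Its length will be $K(\stack_n)$ plus a constant. Let $q^\ast$ be a shortest self-delimiting program with $\mathcal{U}(q^\ast) = \stack_n$, so $|q^\ast| = K(\stack_n)$. Let $D$ be a fixed decoder program, independent of $u$, $n$ and $k$. The program $p = D\,q^\ast$ runs as follows on input $F$:
\begin{enumerate}
\item simulate $q^\ast$ to recover $\stack_n = [(M_1,m_1),\ldots,(M_k,m_k)]$;
\item from $\stack_n$, compute a canonical witness sequence $w = w(\stack_n) \in \GL^{*}$ whose own extremum stack is $\stack_n$;
\item run $F$ on $w$ at its last index and output the result.
\end{enumerate}
Because $q^\ast$ is self-delimiting, $D$ can tell where it ends without any length header. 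This gives $|p| = |D| + K(\stack_n) = K(\stack_n) + O(1)$, with the constant independent of $n$ and $k$.

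Correctness comes from \cref{def:ri}. Take $m = |w|-1$. Since $\stack_m(w) = \stack_n(u)$, rate-independence gives $F[w](m) = F[u](n)$. Hence $\mathcal{U}(p,F) = F[u](n)$ for every $F \in \R$, and this holds simultaneously for all queries, as \cref{def:kR} requires. \cref{prop:suff} already says $F[u](n) = f(\stack_n)$. The witness construction is just the uniform, $F$-independent way of evaluating $f$: the program only has oracle or code access to $F$ as a functional on sequences, not to $f$ directly.

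The main obstacle is step 2. It needs a computable map $\stack \mapsto w(\stack)$ whose output actually reproduces $\stack$ under \cref{alg:stack}. The first attempt will be the natural staircase
\[
  w = (u_0,\; M_1,\, m_1,\, M_2,\, m_2,\, \ldots,\, M_k,\, m_k,\, c),
\]
with a suitable start value $u_0$ and a final value $c$. By the nesting $M_1>\cdots>M_k$ and $m_1<\cdots<m_k$, no push wipes an earlier pair. The trailing value $c$ is a reversal that confirms $m_k$ as an extremum, since \cref{alg:stack} pushes only on a reversal. The delicate points are the following:
\begin{itemize}
\item the first pair, because the algorithm pairs a first maximum with $m_{\mathrm{last}}=-\infty$;
\item whether the stored top pair depends on the pending unconfirmed extremum $e_{\mathrm{prev}}$;
\item grid edge cases such as $m_k+\Delta > M_k$.
\end{itemize}

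If the exact form of the stack produced by the algorithm makes the explicit witness awkward, the fallback is brute-force search. $D$ enumerates $w \in \GL^{*}$ in length-lexicographic order and runs \cref{alg:stack} on each until it finds one with $\stack(w) = \stack_n$. This halts because $u_{0:n}$ itself is such a witness, and $D$ is still of constant size. Rate-independence then finishes the argument exactly as before, so the bound $K_{\R}(u_{0:n}) \le K(\stack_n) + O(1)$ follows either way.
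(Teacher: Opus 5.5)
Your proposal is correct and follows the paper's skeleton: a shortest self-delimiting program for $\stack_n$, followed by a fixed decoder, so the overhead is $|D|=O(1)$ with no length header. The difference is in how the decoder is made independent of $F$.

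The paper invokes \cref{prop:suff} to write $F[u](n)=f(\stack_n)$, then applies the invariance bound $K(f(x))\le K(x)+O(1)$, and asserts that the post-processor is independent of $F$. Taken literally, \cref{prop:suff} only supplies an $f$ that depends on $F$, and the invariance bound controls each $K(F[u](n))$ separately. But $K_{\R}$ asks for one program that answers every query when handed $F$. Even the computability of $f$ in \cref{prop:suff} is asserted rather than shown. Your step of finding some $w$ with $\stack_m(w)=\stack_n(u)$ and evaluating $F$ on it, justified by \cref{def:ri}, is exactly the uniform-in-$F$ evaluation the paper takes for granted. Your version is therefore the more complete argument.

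One caution: make the brute-force search the actual proof rather than a fallback. \cref{alg:stack} as literally written does not produce stacks obeying the invariants of \cref{def:stack}. It pushes a pair at every reversal, stores the sentinel $-\infty$, and repeats maxima. For example, on $(0.5,1,0,0.8,0.2,0.6)$ it returns $[(1,-\infty),(1,0),(0.8,0),(0.8,0.2)]$, so a staircase read off from $\stack_n$ need not reproduce it (and may not even lie in $\GL^{*}$). The search, by contrast, needs only that the stack map is computable and is the same map used in \cref{def:ri}. It halts because $u_{0:n}$ is itself a witness.
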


\begin{proof}
By \cref{prop:suff}, any $F[u](n)$ is computable from $\stack_n$
by a fixed program of length $O(1)$ (independent of $n$, $k$, $F$).
Since $K$ is defined via a prefix-free universal machine,
the standard invariance property gives $K(f(x)) \leq K(x) + O(1)$
for any computable $f$ \citep{LiVitanyi2008}.
Embedding the shortest prefix-free program for $\stack_n$ and
appending the $O(1)$-length post-processor suffices; the prefix-free
encoding is self-delimiting so no additional overhead is needed.
\end{proof}

\begin{remark}[On the overhead: $O(1)$ vs $O(\log K)$]
\label{rem:correction}
An $O(\log n)$ overhead is sometimes suggested informally.
This is incorrect: the depth $k$ is implicit in the number of pairs
and need not be encoded separately.
A subtler claim of $O(\log K(\stack_n))$ overhead arises from
the chain rule $K(x,y) \leq K(x)+K(y)+O(\log\min\{K(x),K(y)\})$
\citep{LiVitanyi2008}, which applies to pairs of \emph{independently
chosen} objects.
Here, however, $F[u](n)$ is obtained from $\stack_n$ by a
\emph{fixed} computable function; the invariance theorem
$K(f(x)) \leq K(x) + O(1)$ applies directly, giving the tighter
$O(1)$ bound.
\end{remark}

\section{Minimality of the Extremum Stack}
\label{sec:minimality}

\subsection{Indicator family}

For each pair $(M, m) \in \GL^2$ with $M \geq m$, define the
\emph{indicator functional}:
\begin{equation}
  F_{(M,m)}[u](n)
  = \mathbf{1}\bigl[(M, m) \in \stack_n\bigr].
  \label{eq:indicator}
\end{equation}

\begin{lemma}[Rate-independence of indicators]
\label{lem:ri}
Each $F_{(M,m)}$ defined in \eqref{eq:indicator} belongs to $\R$.
\end{lemma}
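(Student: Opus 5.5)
We must show two things for each fixed pair $(M,m)\in\GL^2$: that $F_{(M,m)}$ is rate-independent in the sense of \cref{def:ri}, and that it is computable. Causality is implicit in the paper's convention, since $F$ always takes a prefix as input.

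Rate-independence is the quick part. We write $F_{(M,m)}[u](n)=g_{(M,m)}(\stack_n(u))$, where $g_{(M,m)}(S)=\mathbf{1}[(M,m)\in S]$ is a predicate on finite lists of pairs. The definition \eqref{eq:indicator} refers to $u$ only through $\stack_n(u)$. Hence if $\stack_n(u)=\stack_m(v)$, applying the same map $g_{(M,m)}$ to equal arguments gives $F_{(M,m)}[u](n)=F_{(M,m)}[v](m)$, which is exactly the implication in \cref{def:ri}. Note that the pair is written $(M,m)$ while the second time index in \cref{def:ri} is also called $m$. In the written proof I will rename the time index (say to $n'$) so the two uses of $m$ do not clash.

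Computability requires a computable map $u_{0:n}\mapsto\stack_n(u)$, and \cref{alg:stack} provides one. Starting from $\stack=\emptyset$, $e_{\mathrm{prev}}=u_0$, $d=0$, it processes $u_1,\dots,u_n$ using only comparisons, pushes and pops. It halts after $O(n)$ steps, since each element is pushed and popped at most once. All entries lie in the finite grid $\GL$, and the sentinels $\pm\infty$ can be represented by reserved symbols, so every operation is exact. Membership of $(M,m)$ in the resulting finite list is then decided by a linear scan comparing grid values. The composite machine therefore halts on every input and outputs $F_{(M,m)}[u](n)\in\{0,1\}$, so $F_{(M,m)}\in\R$.

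The main obstacle is making sure that $\stack_n$ in \eqref{eq:indicator} refers to the same canonical object in both steps. The map defining equivalence in \cref{def:ri} must be exactly the output of \cref{alg:stack}, including its handling of flat segments and of the still-unconfirmed final extremum $e_{\mathrm{prev}}$. If the two notions differed, the indicator might not factor through the equivalence. I will therefore fix $\stack_n(u)$ to be, by definition, the stack component of the state returned by \cref{alg:stack} after reading $u_{0:n}$. I will apply this convention uniformly in \cref{def:ri} and \eqref{eq:indicator}. Both properties then follow as above, and the argument does not depend on any finer claim about the wiping rule.
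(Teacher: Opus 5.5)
Your proof is correct, and its rate-independence part is exactly the paper's argument: equal stacks give equal membership of $(M,m)$, hence equal indicator values. The paper's proof stops there and never checks that $F_{(M,m)}$ is computable, although membership in $\R$ requires it; your computability step via \cref{alg:stack}, with $\stack_n$ fixed as that algorithm's output and the clashing time index $m$ renamed, supplies what the paper leaves implicit.
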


\begin{proof}
If $\stack_n(u) = \stack_m(v)$, then
$(M,m) \in \stack_n(u) \Leftrightarrow (M,m) \in \stack_m(v)$,
so $F_{(M,m)}[u](n) = F_{(M,m)}[v](m)$.
Rate-independence follows directly from \cref{def:ri}.
\end{proof}

\begin{remark}
$F_{(M,m)}$ is \emph{not} the Preisach relay
$\relay{M}{m}[u](n)$, which indicates whether the relay with
thresholds $(M,m)$ is active, not whether $(M,m)$ appears as a
stack entry.
The relay may be active even when $(M,m) \notin \stack_n$
(if $M$ is not a local maximum of $u_{0:n}$).
The indicator \eqref{eq:indicator} is a distinct, valid
rate-independent functional.
\end{remark}

\subsection{Minimality theorem}

\begin{lemma}[Stack reconstruction]
\label{lem:recon}
Let $\mathcal{S}$ be any set and
$R: \GL^{n+1} \to \mathcal{S}$ a representation such that
every $F \in \R$ is computable from $R(u_{0:n})$.
Then there exists a computable function
$\Phi: \mathcal{S} \to (\GL \times \GL)^*$
such that $\Phi(R(u_{0:n})) = \stack_n$.
\end{lemma}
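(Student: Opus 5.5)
The plan is to reconstruct $\stack_n$ by querying the finite indicator family of \cref{lem:ri}. The grid $\GL$ has $L+1$ points, so the candidate pairs
\[
  \mathcal{C} = \{(M,m) \in \GL^2 : M \geq m\}
\]
form a finite, explicitly enumerable set of size $(L+1)(L+2)/2$, which is independent of $n$. By \cref{lem:ri}, each $F_{(M,m)}$ with $(M,m)\in\mathcal{C}$ belongs to $\R$. By hypothesis, each is therefore computable from $R(u_{0:n})$. I will read this hypothesis uniformly, as in \cref{def:kR}: there is a fixed procedure $A$ with $A(R(u_{0:n}),F)=F[u](n)$ for every $F \in \R$, where $F$ is given by an index. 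Concretely, I will take $A$ to be $\mathcal{U}$ applied to the representation, together with a computable map $(M,m)\mapsto$ an index of $F_{(M,m)}$. That map exists because the indicator is obtained by running \cref{alg:stack} on the prefix and testing membership.

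The construction of $\Phi$ has three steps. First, given $s\in\mathcal{S}$, loop over all $(M,m)\in\mathcal{C}$ and evaluate $b_{(M,m)} = A(s, F_{(M,m)})$. This is a finite loop of fixed length, so $\Phi$ halts whenever each query halts, and on $s=R(u_{0:n})$ each query halts because it returns $F_{(M,m)}[u](n)$. Second, collect the set $S=\{(M,m): b_{(M,m)}=1\}$, which equals the set of entries of $\stack_n$ by \eqref{eq:indicator}. Third, recover the ordering. By \cref{def:stack}, the maxima $M_1>\cdots>M_k$ are strictly decreasing from bottom to top, so sorting $S$ by first coordinate in decreasing order yields exactly $[(M_1,m_1),\ldots,(M_k,m_k)]$. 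The strictly increasing minima give a consistency check but are not needed. The set-to-sequence step is therefore lossless, and $\Phi(R(u_{0:n}))=\stack_n$.

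The main obstacle is interpretive rather than computational. The statement says every $F\in\R$ is computable from $R(u_{0:n})$, and this must be turned into a \emph{single} computable $\Phi$ that does not depend on $u$ or $n$. Pointwise computability for each $F$ separately is not enough, because one could hard-wire answers. I plan to resolve this by invoking the uniform reading of \cref{def:kR}, in which one machine answers all queries given $F$ as input. Because $\mathcal{C}$ is finite and fixed by $L$ alone, even the weaker reading would suffice: take one program per indicator. This gives finitely many programs, whose combined size depends only on $L$, so $\Phi$ is a fixed computable function. A secondary point is that pairs not of the form $(M,m)$ with $M\geq m$, and any sentinel values $\pm\infty$ that \cref{alg:stack} might push, must be handled. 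I will note that stack entries lie in $\GL$ after the first genuine reversal, or else extend $\mathcal{C}$ by the two sentinels, which keeps it finite.
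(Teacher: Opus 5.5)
Your proposal is correct and follows the paper's argument: query the finite family $F_{(M,m)}$ through the representation and collect the pairs that answer $1$. Your sort by strictly decreasing maxima (\cref{def:stack}) also supplies the set-to-sequence step that the paper skips when it asserts the set ``equals $\stack_n$''. One small slip: under \cref{alg:stack} the bottom pair always carries a $\pm\infty$ sentinel, because it is either never popped or is replaced by a pair inheriting the sentinel through $m_{\mathrm{last}}$ or $M_{\mathrm{last}}$, so of your two remedies only enlarging $\mathcal{C}$ by the sentinels is valid.
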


\begin{proof}
The indicator family $\mathcal{F} = \{F_{(M,m)}\}_{M \geq m,\,
(M,m) \in \GL^2}$ is \emph{finite} (at most
$\binom{L}{2} + L = O(L^2)$ elements) and lies in $\R$
by \cref{lem:ri}.
Since $R$ suffices for all $F \in \R$, for each
$(M,m) \in \GL^2$ the value $F_{(M,m)}[u](n)
= \mathbf{1}[(M,m) \in \stack_n]$ is computable from
$R(u_{0:n})$.
Define
\[
  \Phi(R(u_{0:n}))
  = \bigl\{(M,m) \in \GL^2 : F_{(M,m)}[u](n) = 1\bigr\}.
\]
This set equals $\stack_n$ by construction, and $\Phi$ is
computable (finite enumeration over $O(L^2)$ pairs).
\end{proof}

\begin{theorem}[Minimality of the extremum stack]
\label{thm:main}
Let $u_{0:n} \in \GL^{n+1}$ and $\stack_n$ its extremum stack.
Then:
\begin{equation}
  K(\stack_n) - O(1)
  \;\leq\; K_{\R}(u_{0:n})
  \;\leq\; K(\stack_n) + O(1).
  \label{eq:main}
\end{equation}
The extremum stack is, up to an additive constant, the
shortest program that answers every $\R$-query.
\end{theorem}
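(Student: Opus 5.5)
The two inequalities in \eqref{eq:main} are handled separately. The upper bound is exactly \cref{cor:upper}, so nothing new is needed there beyond citing it. The plan concentrates on the lower bound $K(\stack_n) - O(1) \leq K_{\R}(u_{0:n})$, which we obtain by turning a shortest $\R$-query program into a program that prints $\stack_n$, at constant extra cost.

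Let $p$ be a program of length $K_{\R}(u_{0:n})$ with $\mathcal{U}(p,F) = F[u](n)$ for every $F \in \R$. Regard the map $F \mapsto \mathcal{U}(p,F)$ as the representation $R(u_{0:n})$ in \cref{lem:recon}. By \cref{lem:ri}, each indicator $F_{(M,m)}$ lies in $\R$, so $p$ answers it correctly. We then build a fixed wrapper program $q$ that does the following:
\begin{enumerate}
\item read the self-delimiting program $p$;
\item enumerate all $(M,m) \in \GL^2$ with $M \geq m$;
\item run $\mathcal{U}(p, F_{(M,m)})$ for each such pair;
\item collect the pairs returning $1$;
\item sort the collected pairs by decreasing $M$, using the ordering of \cref{def:stack};
\item output the resulting list.
\end{enumerate}
This is precisely the map $\Phi$ of \cref{lem:recon}, except that the output is reordered into the stack order rather than returned as a set. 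The reordering is needed because $\stack_n$ is a sequence, while $\Phi$ produces a set. Since the maxima are strictly decreasing, the order is determined by the set, so no information is lost. Because $p$ is self-delimiting, the concatenation $qp$ is a valid prefix-free program printing $\stack_n$, which gives $K(\stack_n) \leq |p| + |q|$.

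The main obstacle is showing that $|q| = O(1)$, independent of $n$, $k$ and the sequence. The code of $q$ must contain an index for each indicator functional $F_{(M,m)}$, supplied as input to $\mathcal{U}$, and an enumeration of $\GL^2$. Both depend on $L$ but on nothing else. The first thing to try is to treat $L$ as fixed throughout, as the paper does, so that the $O(L^2)$-size enumeration and the indicator codes are hard-wired and the constant depends only on $L$ and $\mathcal{U}$. Should a uniform-in-$L$ statement be wanted, this instead costs an additive $K(L) = O(\log L)$, which is still independent of $n$ and $k$. We also check that each $F_{(M,m)}$ is computable, as membership in $\stack_n$ requires. This holds because \cref{alg:stack} computes $\stack_n$ from $u_{0:n}$, so the indicator is a legitimate query in $\R$ and $p$ must answer it.

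Putting the pieces together gives $K(\stack_n) - O(1) \leq K_{\R}(u_{0:n})$. Combined with \cref{cor:upper}, this yields \eqref{eq:main}. The concluding sentence of the theorem, that $\stack_n$ is up to an additive constant the shortest program answering every $\R$-query, is then just a restatement of the two-sided bound.
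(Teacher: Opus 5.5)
Your proposal is correct and is essentially the paper's proof. The lower bound queries a shortest $\R$-query program on the finite indicator family $F_{(M,m)}$ to recover $\stack_n$ at constant cost (the map $\Phi$ of Lemma~\ref{lem:recon}), and the upper bound is Corollary~\ref{cor:upper}. Your version is somewhat more careful than the paper's: you turn the recovered set back into the ordered stack, bound $K(\stack_n)$ directly by $|p|+|q|$ rather than through $K(R^*(u_{0:n}))$, and state explicitly how the constant depends on $L$ and why the indicators are computable.
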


\begin{proof}
\textbf{Lower bound.}
By \cref{lem:recon}, there exists a computable $\Phi$ with
$\Phi(R^*(u_{0:n})) = \stack_n$, where $R^*$ is the
$K_{\R}$-optimal representation.
Therefore:
\[
  K(\stack_n) \leq K(R^*(u_{0:n})) + O(1) = K_{\R}(u_{0:n}) + O(1),
\]
since $\stack_n$ is computable from the optimal representation
by a program of fixed length $O(1)$ (the description of $\Phi$).
Rearranging gives the lower bound.

\textbf{Upper bound.}
The upper bound follows directly from \cref{cor:upper}:
since every $F[u](n)$ is computable from $\stack_n$ by a fixed
$O(1)$-length program, the prefix-free invariance theorem gives
$K_{\R}(u_{0:n}) \leq K(\stack_n) + O(1)$.
\end{proof}

\section{Discussion}
\label{sec:discussion}

\subsection{Tightness of the bounds}

The bounds in \eqref{eq:main} are tight up to the additive
constant: there exist sequences for which
$K_{\R}(u_{0:n}) \leq K(\stack_n) + O(1)$ and
$K_{\R}(u_{0:n}) \geq K(\stack_n) - O(1)$ simultaneously
(e.g.\ when $\stack_n$ has a short description that immediately
answers all indicator queries).
The $O(1)$ gap is the unavoidable overhead of prefix-free
encoding on a universal machine \citep{LiVitanyi2008}.

\subsection{Comparison with existing compression schemes}

Table~\ref{tab:compare} compares PSTACK-COMPRESS
(the stack-based algorithm implied by \cref{thm:main})
with standard time-series compression methods.
PSTACK-COMPRESS differs from the rest in providing a
\emph{formal optimality guarantee}: no other listed method
bounds compression length via Kolmogorov complexity for any function class.

\begin{table}[t]
\centering
\caption{Comparison of time-series compression algorithms.
$n$ = sequence length; $k$ = stack depth ($k \leq n$);
$w, s$ = window/segment parameters.
\textdagger{} = asymptotically optimal for class $\R$.}
\label{tab:compare}
\small
\begin{tabular}{lcccp{3.2cm}}
\toprule
Algorithm & Time & Space & Optimality & Class preserved \\
\midrule
PAA \citep{Keogh2001} & $O(n)$ & $O(w)$ & None &
  $L^2$-approximation \\
SAX \citep{Lin2003} & $O(n)$ & $O(w)$ & None &
  Symbolic, not rate-independent \\
PLR (Piecewise Linear Repr.) & $O(n\log n)$ & $O(s)$ & None &
  $L^\infty$ piecewise linear \\
PIP & $O(n^2)$ & $O(s)$ & None &
  Heuristic important points \\
Swinging Door & $O(n)$ & $O(1)$ & None &
  Linear with tolerance \\
\midrule
\textbf{PSTACK} (ours) & $O(n)$ & $O(k)$ &
  \textbf{\textdagger{} Kolmogorov} &
  \textbf{Full class $\R$ (rate-independent)} \\
\bottomrule
\end{tabular}
\end{table}

\subsection{Independence from $n$}

The $O(1)$ overhead in \eqref{eq:main} is independent of both $n$ and $k$.
For slowly varying signals (small $k$), the compression ratio
$n/k$ can be arbitrarily large while the optimality gap remains
bounded by a constant.
PSTACK is therefore well-suited for streams with long monotone runs:
industrial sensor data (SCADA), EEG, and financial tick data.

\subsection{Connection to the Preisach operator}

The sufficiency part of \cref{thm:main} is a consequence of
the classical \emph{wiping property} of the Preisach hysteresis
operator \citep{Mayergoyz1991}: a new extremum erases all prior
extrema of smaller magnitude, and the output depends only on the
surviving stack.
The minimality part — which appears to be new — shows that this
compression is not merely sufficient but \emph{necessary}.
No computable rate-independent functional "sees" less than
$\stack_n$.

\subsection{Open questions}

\begin{enumerate}
  \item \emph{Adversarial update complexity.}
    \cref{alg:stack} runs in amortised $O(1)$ per step, but
    an adversarial input stream could force $O(n)$ total pops
    followed by a single push, giving worst-case $\Theta(n)$ cost
    per step.
    Does there exist an online stack-maintenance algorithm
    with $O(\log n)$ worst-case (non-amortised) cost per step,
    while preserving the $K_{\R}$ minimality guarantee?
  \item \emph{Vector extension.} Does an analogue of
    \cref{thm:main} hold for the vector Preisach operator
    \citep{Frydrych2019} with two-dimensional input signals?
    The joint representation $(\stack^x_n, \stack^y_n)$ of two
    independent stacks is a computable pairing, so the overhead
    for answering all vector $\R$-queries should remain $O(1)$
    by the same invariance argument.
    Does this extend to the full vector Preisach measure
    $\mu(\alpha,\beta,\theta)$ over directions $\theta \in [0,2\pi)$?
  \item \emph{KV-cache reduction.} Can PSTACK replace the
    key-value cache in Preisach Attention \citep{Frydrych2025PAL},
    reducing memory from $O(n \cdot d)$ to $O(k \cdot d)$
    without loss of expressiveness for rate-independent tasks?
\end{enumerate}

\section{Conclusion}
\label{sec:conclusion}

We proved that the Preisach extremum stack $\stack_n$ is a
minimal sufficient statistic for the class of computable
rate-independent functionals, in the Kolmogorov complexity sense.
The minimality bound $K_{\R}(u_{0:n}) \geq K(\stack_n) - O(1)$
is established via a finite indicator family.
The matching upper bound $K_{\R}(u_{0:n}) \leq K(\stack_n) + O(1)$
follows from the prefix-free invariance theorem $K(f(x)) \leq
K(x) + O(1)$ for computable $f$, tightening earlier $O(\log n)$
and $O(\log K(\stack_n))$ claims to an additive constant.
The stack-based compression algorithm implied by the result
carries a Kolmogorov optimality guarantee that none of the
standard time-series compression methods provide.

\section*{Declaration on the use of generative AI}
During the preparation of this work the author used AI-assisted
writing tools for grammar checking and LaTeX formatting.
The mathematical content, proofs, and all scientific claims
are the sole responsibility of the author.

\section*{Declaration of competing interests}
The author declares no competing financial or non-financial
interests.

\bibliographystyle{elsarticle-num}
\bibliography{references_letter}

@book{LiVitanyi2008,
  author    = {Li, Ming and Vit{\'a}nyi, Paul},
  title     = {An Introduction to Kolmogorov Complexity and Its Applications},
  edition   = {3},
  publisher = {Springer},
  address   = {New York},
  year      = {2008},
  doi       = {10.1007/978-0-387-49820-1}
}

@book{Mayergoyz1991,
  author    = {Mayergoyz, Isaak D.},
  title     = {Mathematical Models of Hysteresis},
  publisher = {Springer},
  address   = {New York},
  year      = {1991},
  doi       = {10.1007/978-1-4612-3028-1}
}

@book{Brokate1996,
  author    = {Brokate, Martin and Sprekels, J{\"u}rgen},
  title     = {Hysteresis and Phase Transitions},
  publisher = {Springer},
  address   = {New York},
  year      = {1996},
  doi       = {10.1007/978-1-4612-4048-8}
}

@article{Frydrych2014,
  author  = {Frydrych, Piotr and Szewczyk, Roman},
  title   = {New Portfolio Risk Optimisation Method for Strongly
             Dependent Assets},
  journal = {Journal of Engineering Studies and Research},
  volume  = {20},
  number  = {3},
  pages   = {30--37},
  year    = {2014}
}

@phdthesis{Frydrych2019,
  author  = {Frydrych, Piotr},
  title   = {Modelowanie charakterystyk magnesowania amorficznych
             rdzeni dwuosiowych sensor{\'o}w transduktorowych},
  school  = {Warsaw University of Technology},
  year    = {2019}
}

@article{Frydrych2025PAL,
  author  = {Frydrych, Piotr},
  title   = {Preisach Attention: A Hysteretic Model of Sequential Memory},
  journal = {Zenodo},
  year    = {2025},
  doi     = {10.5281/zenodo.20133614},
  url     = {https://doi.org/10.5281/zenodo.20133614},
  note    = {Preprint. Under review at Neural Networks (Elsevier)}
}

@article{Keogh2001,
  author    = {Keogh, Eamonn and Chakrabarti, Kaushik and
               Pazzani, Michael and Mehrotra, Sharad},
  title     = {Dimensionality Reduction for Fast Similarity Search
               in Large Time Series Databases},
  journal   = {Knowledge and Information Systems},
  volume    = {3},
  number    = {3},
  pages     = {263--286},
  year      = {2001},
  doi       = {10.1007/PL00011669}
}

@inproceedings{Lin2003,
  author    = {Lin, Jessica and Keogh, Eamonn and
               Lonardi, Stefano and Chiu, Bill},
  title     = {A Symbolic Representation of Time Series,
               with Implications for Streaming Algorithms},
  booktitle = {Proceedings of the 8th ACM SIGMOD Workshop on
               Research Issues in Data Mining and Knowledge Discovery},
  pages     = {2--11},
  year      = {2003},
  doi       = {10.1145/882082.882086}
}

\end{document}